\font\bb=msbm10 at 12pt
\newcommand{\mathbb}[1]{\hbox{\bb #1}} 
\def\softt{{\leavevmode\setbox1=\hbox{t}%
\hbox to \wd1{t\kern-0.6ex{\char039}\hss}}}
\newcommand{\R}{\mathbb{R}}
\newcommand{\OO}{\mathcal{O}}
\newcommand{\D}{\mathrm{d}}
\newcommand{\pd}{\partial}
\newtheorem{claim}{Claim}[section]
\newtheorem{theorem}[claim]{Theorem}
\newtheorem{lemma}[claim]{Lemma}
\begin{document}

\title[Strong $\delta'$ interaction on a planar loop]
{Spectral asymptotics of a strong $\delta'$ interaction on a planar loop}

\author{Pavel Exner}
\address{Doppler Institute for Mathematical Physics and Applied
Mathematics, \\ Czech Technical University in Prague,
B\v{r}ehov\'{a} 7, 11519 Prague, \\ and  Nuclear Physics Institute
ASCR, 25068 \v{R}e\v{z} near Prague, Czechia} \ead{exner@ujf.cas.cz}

\author{Michal Jex}
\address{Doppler Institute for Mathematical Physics and Applied
Mathematics, \\ Czech Technical University in Prague,
B\v{r}ehov\'{a} 7, 11519 Prague, \\ and Department of Physics,
Faculty of Nuclear Sciences and Physical Engineering, Czech
Technical University in Prague, B\v{r}ehov\'{a} 7, 11519 Prague,
Czechia} \ead{jexmicha@fjfi.cvut.cz}

\begin{abstract}
We consider a generalized Schr\"odinger operator in $L^2(\R^2)$ with an attractive strongly singular interaction of $\delta'$ type characterized by the coupling parameter $\beta>0$ and supported by a $C^4$-smooth closed curve $\Gamma$ of length $L$ without self-intersections. It is shown that in the strong coupling limit, $\beta\to 0_+$, the number of eigenvalues behaves as $\frac{2L}{\pi\beta} + \OO(|\ln\beta|)$, and furthermore, that the asymptotic behaviour of the $j$-th eigenvalue in the same limit is $-\frac{4}{\beta^2} +\mu_j+\OO(\beta|\ln\beta|)$, where $\mu_j$ is the $j$-th eigenvalue of the Schr\"odinger operator on $L^2(0,L)$ with periodic boundary conditions and the potential $-\frac14 \gamma^2$ where $\gamma$ is the signed curvature of $\Gamma$.
\end{abstract}

\maketitle

\section{Introduction}

Schr\"odinger operators with singular interactions supported by manifolds of a lower dimension have been studied for several decades starting from early works \cite{Ku, BT}. In recent years they attracted attention as a model of a quantum particle confined to sets of nontrivial geometry, a possible alternative to the usual quantum graphs \cite{BK} having two advantages over the latter. The first is that they lack the abundance of free parameters associated with the vertex coupling, the second, physically maybe more important, is that the confinement is not strict and a certain tunneling between parts of the graph is allowed. One usually speaks about `leaky' quantum graphs and describes them by Hamiltonians which can be formally written as $-\Delta - \alpha \delta(\cdot-\Gamma),\: \alpha>0,\,$ where $\Gamma$ is the support of the attractive singular interaction. A discussion of such operators and a survey of their properties can be found in \cite{Ex}.

One can think of the singular interaction as of a $\delta$ potential in the direction perpendicular to $\Gamma$, at least at the points where the manifold supporting the interaction is smooth. If the codimension of $\Gamma$ is one, however, there are other singular interactions which can be considered, a prime example being the one coming from the one-dimensional $\delta'$ interaction \cite{AGHH}, that is, operators which can be formally written as
\begin{equation*}
H=-\Delta-\beta^{-1}\delta'(\cdot-\Gamma)\,.
\end{equation*}
The formal expression has to be taken with a substantial grain of salt, of course, because in contrast to the $\delta$ interaction which can be approximated by naturally scaled regular potentials, the problem of approximating $\delta'$ is considerably more complicated --- see \cite{Se, CS, ENZ} and also \cite{CAZ, GH}. What is important for our present purpose, however, is that irrespective of the meaning of such an interaction, there is a mathematically sound way how to define the above operator through boundary conditions, and moreover, one can also specify it using the associated quadratic form \cite{BLL}.

Apart from the definition, one is naturally interested in spectral properties of such operators, in particular, in relation to the geometry of $\Gamma$. In the case of $\delta$-type singular interaction we know, for instance, that $\Gamma$ in the form of broken or bent line gives rise to a nontrivial discrete spectrum \cite{EI} and a similar result can be proven also for the $\delta'$-interaction \cite{BEL}. In this paper we want demonstrate another manifestation of the relation between eigenvalues of $H$ and the shape of $\Gamma$. It is inspired by the paper \cite{EY} in which it was shown how the eigenvalues coming from a $\delta$ interaction supported by a $\mathcal C^4$ Jordan curve $\Gamma$ behave in the strong-coupling regime, $\alpha\to \infty$, namely that after a renormalization consisting of subtracting the $\Gamma$-independent divergent term they are in the leading order given by the respective eigenvalue of a one-dimensional Schr\"odinger operator with a potential determined by the curvature of $\Gamma$.

Here we are going to show that in the $\delta'$ case, where the strong coupling limit is $\beta\to 0_+$, we have an analogous result, namely that the asymptotic expansion of the eigenvalues starts from a $\Gamma$-independent divergent term followed by the appropriate eigenvalues of a one-dimensional Schr\"odinger operator, the same one as in the $\delta$ case. We will be also able to derive an asymptotic expression for the number of eigenvalues dominated by a natural Weyl-type term. In the next section we state the problem properly and formulate the indicated results, the next two sections are devoted to the proofs. The technique is similar to that of \cite{EY}, however, the argument is slightly more complicated because the present form of the associate quadratic form does allow one to estimate the operator in question by operators with separated variables. In conclusion we shall comment briefly on possible extensions of the results.

\section{Formulation of the problem and main results}
\setcounter{equation}{0}

We consider a closed curve $\Gamma$ without self-intersections, conventionally parameterized by its arc length,
\begin{equation*}
\Gamma:\:[0,L]\rightarrow\mathbb{R}^2\,, \quad s\mapsto(\Gamma_1(s),\Gamma_2(s))\,,
\end{equation*}
with the component functions $\Gamma_1,\Gamma_2\in C^4(\mathbb R)$. The operator, we are interested in, can acts as the Laplacian outside the interaction support,
\begin{equation*}
(H_\beta\psi)(x)=-(\Delta\psi)(x)
\end{equation*}
for $x\in\mathbb{R}^2\setminus\Gamma$, and its domain is $\mathcal{D}(H_\beta)=\{\psi\in H^2(\mathbb{R}^2\setminus\Gamma) \mid \partial_{n_\Gamma}\psi(x)=\partial_{-n_\Gamma}\psi(x)=\psi'(x)|_{\Gamma},\, -\beta\psi'(x)|_{\Gamma}=\psi(x)|_{\partial_+\Gamma} -\psi(x)|_{\partial_-\Gamma}\}$, where $n_\Gamma$ is the normal to $\Gamma$, for definiteness supposed to be the outer one, and $\psi(x)|_{\partial_\pm\Gamma}$ are the appropriate traces of the function $\psi$. The quadratic form associated with this operator is well known \cite[Prop.~3.15]{BLL}. In order to write it, we employ the locally orthogonal curvilinear coordinates $(s,u)$ in the vicinity of the curve introduced in relation (\ref{curvilin}) below. With an abuse of notation we write the value of a function $\psi\in C(\R^2) \cap H^1(\mathbb{R}^2\setminus\Gamma)$ as $\psi(s,u)$; then we have
\begin{equation*}
h_\beta[\psi]= \|\nabla \psi\|^2 -\beta^{-1}\int_{\Gamma}|\psi(s,0_+)-\psi(s,0_-)|^2\,\D s\,.
\end{equation*}
To state our main theorem we introduce the following operator,
\begin{equation} \label{comparison}
S=-\frac{\partial^2}{\partial s^2}-\frac{1}{4}\gamma(s)^2\,,
\end{equation}
where $\gamma$ denotes the signed curvature of the loop, $\gamma(s) := (\Gamma''_1 \Gamma'_2 - \Gamma'_1 \Gamma''_2)(s)$. The domain of this operator is $\mathcal D(S)=\{\psi\in H^2(0,L)\mid \psi(0)=\psi(L),\,\psi'(0)=\psi'(L)\}$. We denote by $\mu_j$ the $j$-th eigenvalue of $S$ with the multiplicity taken into account.
\begin{theorem} \label{thm1}
One has $\sigma_\mathrm{ess}(H_\beta) = [0,\infty)$ and to any $n\in\mathbb{N}$ there is a $\beta_n>0$ such that
$$ 
\#\sigma_\mathrm{disc}(H_\beta)\geq n\quad\mathrm{holds \;\,for}\quad \beta\in(0,\beta_n)\,.
$$ 
For any such $\beta$ we denote by $\lambda_j(\beta)$ the $j$-th eigenvalue of $H_\beta$, again counted with its multiplicity. Then the asymptotic expansions
\begin{equation*}
\lambda_j(\beta)= -\frac{4}{\beta^2} +\mu_j+\mathcal{O}\big(\beta|\ln\beta|\big)\,, \quad j=1,\dots,n\,,
\end{equation*}
are valid in the limit $\beta\to 0_+$.
\end{theorem}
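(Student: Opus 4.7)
My plan is to mimic the Dirichlet-Neumann bracketing scheme developed in \cite{EY} for the $\delta$ interaction, adapting it to the jump condition of $\delta'$ type. The essential spectrum statement I would dispatch first by a standard argument: bracket $H_\beta$ between Neumann and Dirichlet realisations on a large disk $D_R\supset\Gamma$, observe that the corresponding exterior Laplacians have essential spectrum $[0,\infty)$, and exhibit a singular Weyl sequence of plane-wave type to obtain $[0,\infty)\subset\sigma_{\mathrm{ess}}(H_\beta)$. The bulk of the work goes into the eigenvalue asymptotics, and for this I introduce a tubular neighbourhood $\Omega_a := \{x\in\R^2 : \mathrm{dist}(x,\Gamma) < a\}$ and the locally orthogonal curvilinear coordinates $(s,u)\in [0,L)\times(-a,a)$ from the paper. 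For $a\|\gamma\|_\infty<1$ these constitute a $C^3$-diffeomorphism $\Phi_a$ onto $\Omega_a$. Imposing Neumann, respectively Dirichlet, boundary conditions on $\pd\Omega_a$ yields operators $H_\beta^{N,a}\le H_\beta\le H_\beta^{D,a}$ that decouple as sums of a tube operator $T_{a,\beta}^{\bullet}$ and an exterior Laplacian with non-negative spectrum, so every negative eigenvalue comes from the tube part.

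Next I pull back to the coordinate cylinder via the unitary $U\psi := (1+u\gamma(s))^{1/2}\psi\circ\Phi_a$, which maps $L^2(\Omega_a)$ isometrically to $L^2((0,L)\times(-a,a))$. The transformed quadratic form acquires the standard shape
\begin{equation*}
t_{a,\beta}^{\bullet}[\varphi] = \int_0^L\!\int_{-a}^a \Bigl( g^{-2}|\pd_s\varphi|^2 + |\pd_u\varphi|^2 + W(s,u)|\varphi|^2\Bigr)\,\D u\,\D s - \beta^{-1}\!\int_0^L |\varphi(s,0_+)-\varphi(s,0_-)|^2\,\D s,
\end{equation*}
where $g(s,u):=1+u\gamma(s)$ and a routine curvature computation gives $W(s,u) = -\tfrac14\gamma(s)^2+\OO(|u|)$ uniformly in $s$, with periodic conditions in $s$ and either (modified) Neumann or Dirichlet at $u=\pm a$. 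Using $|u|\le a$ I would bracket $g^{-2}$ between $(1\mp a\|\gamma\|_\infty)^{-2}$ and $W$ between $-\tfrac14\gamma^2\pm Ca$, reducing the form estimate to two-sided operator inequalities
\begin{equation*}
(1-c a)\, S\otimes I + I\otimes U_{a,\beta}^{N} - c a \;\le\; T_{a,\beta}^{N} \;\le\; T_{a,\beta}^{D} \;\le\; (1+c a)\, S\otimes I + I\otimes U_{a,\beta}^{D} + c a,
\end{equation*}
where $S$ is the comparison operator (\ref{comparison}) and $U_{a,\beta}^{D/N}$ denotes $-\pd_u^2$ on $(-a,a)$ with the $\delta'$ condition $-\beta\chi'(0)=\chi(0_+)-\chi(0_-)$, $\chi'(0_+)=\chi'(0_-)$ at the origin and Dirichlet/Neumann at $u=\pm a$.

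I would then analyse $U_{a,\beta}^{D/N}$ directly. The antisymmetric ansatz $\chi(u)=\mathrm{sgn}(u)\sinh(\kappa(a-|u|))$, respectively $\mathrm{sgn}(u)\cosh(\kappa(a-|u|))$, solves $-\chi''=\lambda\chi$ with $\lambda=-\kappa^2$ and the outer boundary condition, and imposing the $\delta'$ jump reduces to the transcendental equation $\beta\kappa=2\tanh(\kappa a)$, resp.\ $\beta\kappa=2\coth(\kappa a)$. Asymptotic inversion as $\beta\to 0_+$ delivers a unique negative eigenvalue $\mathcal E_{a,\beta}^{D/N}=-\tfrac{4}{\beta^2}+\OO(e^{-c a/\beta})$, while the remainder of the spectrum of $U_{a,\beta}^{\bullet}$ is bounded below by a constant depending only on $a$. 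The tensor-product min-max then identifies the $j$-th eigenvalue of the bracketing operators as $-\tfrac{4}{\beta^2}+(1\pm ca)\mu_j\pm ca+\OO(e^{-ca/\beta})$. Choosing $a=a(\beta):=\beta|\ln\beta|$ balances the exponential transverse remainder (which becomes $\OO(\beta^2)$) against the longitudinal error (which is $\OO(a)=\OO(\beta|\ln\beta|)$); passing through the outer min-max yields the claimed asymptotics, and the lower bracket automatically supplies at least $n$ negative eigenvalues of $H_\beta$ for $\beta$ sufficiently small.

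The chief obstacle I anticipate is the one the authors flag in the introduction: the $\delta'$ quadratic form does not split into a tensor sum even in curvilinear coordinates, because the metric factor $g(s,u)$ weighting the longitudinal derivative depends on $s$ through $\gamma(s)$. The jump term itself depends only on the trace at $u=0$ and is trivially compatible with separation, so the real coupling comes from the kinetic energy. Decoupling therefore costs uniform pointwise estimates of $g^{\pm 2}$, which force $a\to 0$ with $\beta$; the delicate point is to calibrate this shrinkage so that neither the exponentially small transverse error nor the linearly small longitudinal deformation dominates, and it is precisely the rate $a\sim\beta|\ln\beta|$ that produces the stated error bound.
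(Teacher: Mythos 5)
Your overall strategy coincides with the paper's (Dirichlet--Neumann bracketing on a tube, the unitary rescaling by $(1+u\gamma)^{1/2}$, decoupling into a longitudinal operator close to $S$ and a transverse one-dimensional $\delta'$ problem, and the calibration $a\sim\beta|\ln\beta|$), but there is a genuine gap at the point you dismiss as trivial. When you conjugate by $U\psi=(1+u\gamma)^{1/2}\,\psi\circ\Phi_a$, the cross term in $|\partial_u(U^{-1}\varphi)|^2$ is a total $u$-derivative of $|\varphi|^2$ weighted by $\gamma(s)$, and integrating it by parts over $(-a,0)\cup(0,a)$ produces surviving boundary contributions. In the $\delta$ case of \cite{EY} the interface contributions at $u=0_\pm$ cancel because the function is continuous across $\Gamma$; here they do not, and the transformed form acquires the extra term $\frac12\int_0^L\gamma(s)\bigl(|\varphi(s,0_+)|^2-|\varphi(s,0_-)|^2\bigr)\,\D s$ (plus Robin-type terms at $u=\pm a$ in the Neumann bracket). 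Your expression for $t^{\bullet}_{a,\beta}$ omits all of these, so your transverse operators $U^{D/N}_{a,\beta}$ carry the wrong interface condition: the correct one is $\chi'(0_\pm)=-\beta^{-1}(\chi(0_+)-\chi(0_-))+\frac12\gamma(s)(\chi(0_+)+\chi(0_-))$, which is $s$-dependent, so the tensor decomposition you rely on is not available as stated. Nor can the extra term be absorbed by a crude Cauchy--Schwarz estimate: shifting $\beta^{-1}$ by an $\OO(1)$ amount moves the leading term $-4/\beta^2$ by $\OO(\beta^{-1})$, which swamps the $\mu_j$ you are trying to resolve.

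The paper's resolution is precisely the observation you would need to supply: the unique negative eigenfunction of the transverse operator is odd in $u$ (for the Dirichlet bracket it is $\sinh(\kappa(x\mp a))$, for the Robin/Neumann bracket the analogous combination), hence $\chi(0_+)+\chi(0_-)=0$ and the $\gamma(s)$-dependent term drops out of the spectral condition, rendering the lowest transverse eigenvalue $s$-independent after all; this is what legitimises writing $Q^{\pm}_{a,\beta}=U^{\pm}_a\otimes I+\int^{\oplus}T^{\pm}_{a,\beta}(s)\,\D s$ and comparing eigenvalues. Your transcendental equation $\beta\kappa=2\tanh(\kappa a)$ happens to be correct for the Dirichlet case for exactly this reason, but the Neumann-side equation is not $\beta\kappa=2\coth(\kappa a)$: with the Robin conditions $\mp\gamma_+\chi(\pm a)=\chi'(\pm a)$ forced by the boundary terms at $u=\pm a$, it becomes $\kappa=\frac{2}{\beta}\,\frac{1+Z e^{-2\kappa a}}{1-Z e^{-2\kappa a}}$ with $Z=\frac{\kappa-\gamma_+}{\kappa+\gamma_+}$. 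These corrections do not change the final asymptotics, but without them the argument as written does not establish the two-sided bound. The remaining ingredients of your outline (the $\OO(a j^2)$ comparison of $\mu^\pm_j(a)$ with $\mu_j$, the min-max passage, the choice of $a(\beta)$) match the paper and are sound.
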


\medskip

\begin{theorem} \label{thm2}
The counting function $\beta\mapsto \#\sigma_\mathrm{disc}(H_\beta)$ admits the asymptotic expansion
\begin{equation*}
\#\sigma_\mathrm{disc}(H_\beta) = \frac{2L}{\pi\beta}+\mathcal{O}(|\ln\beta|)
\quad\mathrm{as}\;\; \beta\to 0_+\,.
\end{equation*}
\end{theorem}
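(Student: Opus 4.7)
The strategy is a Dirichlet--Neumann bracketing on a narrow tubular neighbourhood of $\Gamma$, reducing the counting problem to that of an operator with separated variables. I would pick $a=a(\beta)=\beta|\ln\beta|$ and set $\Omega_a=\{x\in\R^2 : \mathrm{dist}(x,\Gamma)<a\}$; for $\beta$ small enough the normal map $\Phi(s,u)=\Gamma(s)+u\,n_\Gamma(s)$ is a diffeomorphism of $(0,L)\times(-a,a)$ onto $\Omega_a$. Imposing Neumann (resp.\ Dirichlet) conditions on $\partial\Omega_a$ yields the bracketing
$H^N_{\beta,a}\oplus(-\Delta^N_{\Omega_a^c})\le H_\beta\le H^D_{\beta,a}\oplus(-\Delta^D_{\Omega_a^c})$, with $H^{N/D}_{\beta,a}$ carrying the $\delta'$ interaction on $\Gamma$. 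The exterior Laplacians are non-negative, so $N(0,H^D_{\beta,a})\le\#\sigma_{\mathrm{disc}}(H_\beta)\le N(0,H^N_{\beta,a})$. Passing to curvilinear coordinates and applying the unitary $\psi\mapsto(1+u\gamma(s))^{1/2}(\psi\circ\Phi)$ to cancel the Jacobian weight brings the form to
$$
\int_0^L\!\!\int_{-a}^a\!\bigl[(1+u\gamma)^{-2}|\partial_s\phi|^2+|\partial_u\phi|^2+W(s,u)|\phi|^2\bigr]du\,ds-\beta^{-1}\!\int_0^L\!\!|\phi(s,0_+)-\phi(s,0_-)|^2ds,
$$
with $W(s,u)=-\gamma(s)^2/[4(1+u\gamma(s))^2]+O(a)$ a bounded curvature-induced potential.

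Since $(1+u\gamma)^{-2}=1+O(a)$ uniformly on $[-a,a]$, this form can be sandwiched between two forms with separated variables, giving operator inequalities $T^-_a\otimes I+I\otimes L^-_{a,\beta}\le\widetilde H^N_{\beta,a}$ and $\widetilde H^D_{\beta,a}\le T^+_a\otimes I+I\otimes L^+_{a,\beta}$, where $T^\pm_a=-(1\pm C_1a)\partial_s^2+W^\pm(s)$ acts periodically on $L^2(0,L)$ with $W^\pm$ bounded, while $L^\pm_{a,\beta}=-\partial_u^2-\beta^{-1}\delta'(u)$ acts on $L^2(-a,a)$ with Neumann or Dirichlet conditions at $u=\pm a$. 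A direct analysis of the transcendental equation for $L^\pm_{a,\beta}$ shows that for $\beta$ small it has exactly one negative eigenvalue $\epsilon^\pm(a,\beta)=-4/\beta^2+\mathcal{O}(e^{-4a/\beta})=-4/\beta^2+\mathcal{O}(\beta^4)$, with all other eigenvalues non-negative; this reflects the fact that the full-line $\delta'$ operator has a unique bound state at $-4/\beta^2$ whose eigenfunction decays at rate $2/\beta$. Consequently, counting negative eigenvalues of the separable operators reduces to $N(-\epsilon^\pm(a,\beta),T^\pm_a)+\mathcal{O}(1)$, the remainder accounting for the finitely many negative eigenvalues arising from higher transverse modes coupled to the bounded-below $T^\pm_a$.

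The final ingredient is the Weyl asymptotics for the one-dimensional periodic Schr\"odinger operator: $N(E,T^\pm_a)=L\sqrt{E/(1\pm C_1a)}/\pi+\mathcal{O}(1)=\frac{L\sqrt{E}}{\pi}(1\mp C_1a/2)+\mathcal{O}(1)$ as $E\to\infty$. Substituting $E=4/\beta^2+\mathcal{O}(\beta^4)$ and $a=\beta|\ln\beta|$ gives both brackets of the form $\frac{2L}{\pi\beta}+\mathcal{O}(|\ln\beta|)$, which establishes the theorem. The choice of $a$ is dictated by two competing mechanisms: the boundary corrections to the transverse eigenvalue are exponentially small in $a/\beta$ (so we want $a/\beta$ large), while the curvilinear distortion of the tangential operator contributes an error of order $a/\beta$ to the count (so we want $a/\beta$ small), and the balance is precisely $a/\beta=|\ln\beta|$. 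The main obstacle is the sharp spectral analysis of $L^\pm_{a,\beta}$: verifying that only one eigenvalue lies strictly below zero and pinning down the exponentially small correction to $-4/\beta^2$ by an explicit treatment of the boundary-value problem for the $\delta'$-interaction on $(-a,a)$, uniformly in the choice of $N$ or $D$ boundary condition.
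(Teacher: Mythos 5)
Your strategy is the same as the paper's: Dirichlet--Neumann bracketing on a tubular neighbourhood of width $a\sim\beta|\ln\beta|$, a unitary transformation to curvilinear coordinates, a sandwich by operators with (almost) separated variables, an explicit analysis of the transverse $\delta'$ problem on $(-a,a)$, and a Weyl-type count for the periodic longitudinal operator. The balance of errors you describe (exponentially small in $a/\beta$ versus linear in $a$) is exactly the one the paper exploits.

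There are, however, two concrete points where your reduction as written is not quite right and where the paper has to do real work. First, the unitary map $\psi\mapsto(1+u\gamma)^{1/2}\psi\circ\Phi$ does not merely produce the volume potential $W$: it also generates an interface term $\frac12\int_0^L\gamma(s)\big(|\phi(s,0_+)|^2-|\phi(s,0_-)|^2\big)\,\D s$ at $u=0$ and, on the Neumann side, sign-indefinite boundary terms $\mp\int_0^L\frac{\gamma(s)}{2(1\pm a\gamma(s))}|\phi(s,\pm a)|^2\,\D s$ at $u=\pm a$. Your sandwiched transverse operator is therefore not the pure $\delta'$ operator with N/D ends: it carries an $s$-dependent coupling at the origin and Robin (not Neumann) conditions at $\pm a$. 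The paper's Lemmata 3.2--3.3 show that the interface term does not shift the negative transverse eigenvalue (the bound state is odd, so $|f(0_+)|^2=|f(0_-)|^2$) and compute the Robin-perturbed eigenvalue explicitly; these verifications are needed before your ``$\epsilon^\pm=-4/\beta^2+\mathcal{O}(\mathrm{e}^{-4a/\beta})$'' can be asserted (and note the correction to the \emph{eigenvalue} is $\mathcal{O}(\beta^{-2}\mathrm{e}^{-4a/\beta})$, not $\mathcal{O}(\mathrm{e}^{-4a/\beta})$ --- harmless here, but worth getting right). Second, your ``$+\mathcal{O}(1)$ from higher transverse modes'' on the Neumann side is asserted rather than proved: since the Robin terms could in principle drag the second transverse eigenvalue down near zero, you need a quantitative lower bound on it. The paper supplies this (its second eigenvalue is at least $\min\{\gamma_+/2a,(\pi/4a)^2\}\to\infty$), which in fact shows the higher modes contribute \emph{nothing}, not just $\mathcal{O}(1)$. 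Both gaps are repairable and your final asymptotics $\frac{2L}{\pi\beta}+\mathcal{O}(|\ln\beta|)$ is correct, but as it stands the proposal skips precisely the steps that prevent the problem from being a literal separation of variables.
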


\section{Proof of Theorem~\ref{thm1}}
\setcounter{equation}{0}

The essential spectrum of $H_\beta$ is found in \cite[Thm.~3.16]{BLL}. To prove the claim about the discrete one we need first a few auxiliary results. To begin with, we introduce locally orthogonal curvilinear coordinates $s$ and $u$ which allow us to write points $(x,y)$ in the vicinity of the curve as
\begin{equation} \label{curvilin}
(x,y)=\big(\Gamma_1(s)+u\Gamma'_2(s),\Gamma_2(s)-u\Gamma'_1(s)\big)\,.
\end{equation}
Since $\Gamma$ is supposed to be a $C^4$ smooth closed Jordan curve, it is not difficult to establish that the map (\ref{curvilin}) is injective for all $u$ small enough; for a detailed proof see \cite{EY}.

We choose a strip neighbourhood $\Omega_a:= \{x\in\R^2:\, \mathrm{dist\,}(x,\Gamma)<a\}$ of $\Gamma$ with $a$ small enough to ensure the injectivity and use bracketing to get a two-sided estimate of the operator $H_\beta$ by imposing Dirichlet and Neumann condition at the boundary of $\Omega_a$, i.e.
\begin{equation} \label{bracketing}
H_N(\beta)\leq H_\beta\leq H_D(\beta)\,,
\end{equation}
where both the estimating operators correspond to the same differential expression and  $\mathcal D(H_N(\beta))=\{\psi\in\mathcal D(H_\beta)\mid \partial_{u_+}\psi(s,a)=\partial_{u_-}\psi(s,-a)=0\}$ while the other is $\mathcal D(H_D(\beta))=\{\psi\in\mathcal D(H_\beta) \mid \psi(s,a)=\psi(s,-a)=0\}$. The operators $H_D(\beta)$ and $H_N(\beta)$ are obviously direct sums of operators corresponding to the parts of the plane separated by the boundary conditions, and since their parts referring to $\R^2 \setminus \overline\Omega_a$ are positive, we can neglect them when considering the discrete spectrum. The parts of $H_N(\beta)$ and $H_D(\beta)$ referring to the strip $\Omega_a$ are associated with the following quadratic forms,
\begin{eqnarray*}
h_{N,\beta}[f]=\|\nabla f\|^2 -\beta^{-1}\int_{\Gamma}|f(s,0_+)-f(s,0_-)|^2\, \D s\,, \\
h_{D,\beta}[g]=\|\nabla g\|^2 -\beta^{-1}\int_{\Gamma}|g(s,0_+)-g(s,0_-)|^2\, \D s\,,
\end{eqnarray*}
respectively, the former being defined on $H^1(\Omega_a\setminus\Gamma)$, the latter on $H_0^1(\Omega_a\setminus\Gamma)$. Our first task is to rewrite these forms in terms of the curvilinear coordinates $s$ and $u$.

\begin{lemma}
Quadratic forms $h_{N,\beta}$, $h_{D,\beta}$ are unitarily
equivalent to quadratic forms $q_{N,\beta}$ and $q_{D,\beta}$ which
can be written as
\begin{eqnarray*}
\hspace{-3em} q_D[f]= \|\frac{\partial_s f}{g}\|^2 +\|\partial_u f\|^2 +(f,Vf) -\beta^{-1}\int_{0}^{L}|f(s,0_+)-f(s,0_-)|^2\, \D s \\
\hspace{-.5em} + \frac12 \int_{0}^{L} \gamma(s) \big(|f(s,0_+)|^2-|f(s,0_-)|^2\big)\, \D s \\[.3em]
\hspace{-3em} q_N[g]= q_D[g] -\int_{0}^{L}\frac{\gamma(s)}{2(1+a\gamma(s))}|f(s,a)|^2\, \D s
+\int_{0}^{L}\frac{\gamma(s)}{2(1-a\gamma(s))}|f(s,-a)|^2\, \D s
\end{eqnarray*}
defined on $H_0^1((0,L) \times ((-a,0)\cup (0,a)))$ and $H^1((0,L) \times ((-a,0)\cup (0,a)))$, respectively, with periodic boundary conditions in the variable $s$. The geometrically induced potential in these formul{\ae} is given by  $V=\frac{u\gamma''}{2g^3} -\frac{5(u\gamma')^2}{4g^4} -\frac{\gamma^2}{4g^2}$ with $g(s):= 1+u\gamma(s)$.
\end{lemma}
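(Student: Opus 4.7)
The plan is to pass from the Cartesian to the curvilinear coordinates $(s,u)$ via the diffeomorphism (\ref{curvilin}), which is well defined on $\Omega_a$ since $a$ is chosen small enough to guarantee injectivity. A direct computation gives the induced metric tensor $\mathrm{diag}(g^2,1)$ with $g(s,u):=1+u\gamma(s)$, so that $\D x\,\D y=g(s,u)\,\D s\,\D u$. I would then introduce the unitary map $U\colon L^2(\Omega_a)\to L^2((0,L)\times((-a,0)\cup(0,a)))$ defined by $(U\psi)(s,u):=g(s,u)^{1/2}\psi(x(s,u),y(s,u))$ with periodic identification in $s$, and set $f:=U\psi$. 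The inverse substitution $\psi=g^{-1/2}f$ removes the Jacobian weight from the $L^2$ norm, and the form domains $H^1$ and $H^1_0$ on the transformed strip follow from those of $h_{N,\beta}$ and $h_{D,\beta}$.

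Writing $|\nabla\psi|^2=g^{-2}|\partial_s\psi|^2+|\partial_u\psi|^2$, one obtains
\begin{equation*}
\|\nabla\psi\|^2 = \int g^{-1}|\partial_s\psi|^2\,\D s\,\D u + \int g\,|\partial_u\psi|^2\,\D s\,\D u.
\end{equation*}
Applying the chain rule to $\psi=g^{-1/2}f$ in each integral produces a main kinetic term of the required form ($\|\partial_s f/g\|^2$ and $\|\partial_u f\|^2$, respectively), a quadratic remainder coming from differentiating $g^{-1/2}$, and a cross term proportional to $\mathrm{Re}(\bar f\partial_\ast f)$. Using the identity $2\,\mathrm{Re}(\bar f\partial_\ast f)=\partial_\ast|f|^2$, each cross term becomes a total derivative and can be handled by integration by parts.

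In the $s$-direction the periodicity kills the endpoint contributions; combining the resulting volume term with the quadratic remainder and using $\partial_s g=u\gamma'$, $\partial_s^2 g=u\gamma''$ produces the first two summands $\frac{u\gamma''}{2g^3}-\frac{5(u\gamma')^2}{4g^4}$ of the geometric potential $V$. The $u$-integration by parts is performed separately on $(-a,0)$ and on $(0,a)$; using $\partial_u g=\gamma$ and $\partial_u(\gamma/(2g))=-\gamma^2/(2g^2)$, the volume contribution combines with the quadratic remainder into $-\frac{\gamma^2}{4g^2}$, while the evaluation of $-\frac{\gamma}{2g}|f|^2$ at the four interfaces $u=0_+,\,0_-,\,\pm a$ yields the boundary integrals displayed in $q_D$ and in the correction $q_N-q_D$.

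Finally, the $\delta'$ surface term is left unchanged since $g(s,0)=1$ and hence $\psi(s,0_\pm)=f(s,0_\pm)$. Assembling everything, the Dirichlet case retains only the jump contribution $\frac{1}{2}\int\gamma(|f(s,0_+)|^2-|f(s,0_-)|^2)\,\D s$ because $f$ vanishes at $u=\pm a$, while in the Neumann case the outer boundary terms at $u=\pm a$ survive and give precisely the difference $q_N-q_D$ stated in the lemma. The main obstacle is the careful accounting of the internal boundary at $u=0$, where $f$ is permitted to have a jump: the integration by parts must be carried out from each side separately and the signs of the four boundary evaluations tracked individually. Apart from this bookkeeping, the proof reduces to a mechanical application of the chain rule and integration by parts on the flat strip.
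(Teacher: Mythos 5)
Your proposal is correct and follows essentially the same route as the paper: the same unitary map $(U\psi)(s,u)=\sqrt{1+u\gamma(s)}\,\psi(x(s,u),y(s,u))$ is used, and the paper simply states that $q_{j,\beta}[U f]=h_{j,\beta}[f]$ follows by direct computation, which is exactly the chain-rule-plus-integration-by-parts calculation you carry out (and your bookkeeping of the cross terms correctly reproduces $V$ and all four boundary contributions).
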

\begin{proof}
Using the conventional shorthands, $\partial_s = \frac{\partial}{\partial s}$ etc., we express $\partial_s$ and $\partial_u$ as linear combinations of $\partial_1$ and $\partial_2$ with the coefficients $\partial_s x_1= \Gamma_1'+u\Gamma_2''$, $\partial_s x_2= \Gamma_2'-u\Gamma_1''$, $\partial_u x_1= \Gamma_2'$, and $\partial_u x_2= -\Gamma_1'$. Inverting these relations we get
\begin{equation*}
\partial_1 = g^{-1}\big(-\Gamma_1' \partial_s -(\Gamma_2'-u\Gamma_1'') \partial_u \big)\,, \quad
\partial_2 = g^{-1}\big(-\Gamma_2' \partial_s +(\Gamma_1'+u\Gamma_2'') \partial_u \big)\,,
\end{equation*}
where $g=(\Gamma_1'+u\Gamma_2'')\Gamma_1' +(\Gamma_2'-u\Gamma_1'')\Gamma_2' =\Gamma_2''\Gamma_1'-\Gamma_2'\Gamma_1'' =1+u\gamma$ because $\Gamma_1'^2+\Gamma_2'^2=1$ holds by assumption. The last relation gives $\Gamma_1'^2+\Gamma_2'^2=1$ which in turn implies $\gamma^2=\Gamma_1''^2+\Gamma_2''^2$. Using these identities we can check by a direct computation that
\begin{equation*}
q_{j,\beta}[Uf] =h_{j,\beta}[f]
\end{equation*}
with $(Uf)(s,u):= \sqrt{1+u\gamma(s)}\, f(x_1(s,u),x_2(s,u))$ holds for $j=D,N$ and all functions $f\in\mathcal{D}(h_{i,\beta})$, which proves the claim.
\end{proof}

The forms $q_{N,\beta}$ and $q_{D,\beta}$ are still not easy to handle and we are going to replace the estimate (\ref{bracketing}) by a cruder one in terms of following forms associated with operators. As for the upper bound, we introduce the quadratic form $q_{a,\beta}^{+}$ acting as
\begin{eqnarray*}
\lefteqn{q_{a,\beta}^{+}[f]=\|\pd_u f\|^2
+(1-a\gamma_+)^{-2}\|\pd_s f\|^2 +(f,V^{(+)}f)} \\ &&
-\beta^{-1}\int_{0}^{l}|f(s,0_+)-f(s,0_-)|^2 \D s +\frac12
\int_{0}^{l} \gamma(s) \big(|f(s,0_+)|^2-|f(s,0_-)|^2\big)\, \D s
\end{eqnarray*}
where $V^{(+)}:=\frac{a\gamma_+''}{2(1-a\gamma_+)^3}-\frac{\gamma^2}{4(1+a\gamma_+)^2}$ with $\gamma_+'':= (\gamma'')_+$ and the positive (negative) part given by the standard convention, $f_\pm:= \frac12 (|f|\pm f)$; we have neglected here the non-positive term $-\frac54 (u\gamma')^2 g^{-4}$. In contrast to the argument used in the $\delta$ interaction case \cite{EY} the operator $Q_{a,\beta}^{+}$ associated with this form does not have separated variables, however, one can write it as $Q_{a,\beta}^{+}=U_{a}^+\otimes I+\int_{[0,L)}^{\oplus} T_{a,\beta}^{+}(s)\,\D s$ and we are going to show that the spectrum of second part associated with the form
\begin{equation*}
t_{a,\beta}^{+}(s)[f]:= \|f'\|^2 -\frac{1}{\beta}\, |f(0_+)-f(0_-)|^2
+\frac12 \gamma(s) \big(|f(s,0_+)|^2-|f(s,0_-)|^2\big)
\end{equation*}
is independent of $s$. The operator itself acts as $T_{a,\beta}^{+}(s)f =-f''$ with the domain
\begin{eqnarray*}
\hspace{-2em} \lefteqn{\mathcal{D}(T_{a,\beta}^{+}(s))=
\big\{f\in H^2((-a,a)\setminus\{0\})\mid f(a)=f(-a)=0\,,} \\ &&
f'(0_-)=f'(0_+)= -\beta^{-1}(f(0_+)-f(0_-))+\frac12 \gamma(s)(f(0_+)+f(0_-)) \big\}
\end{eqnarray*}

\begin{lemma} \label{trans+}
The operators $T_{a,\beta}^{+}(s)$ has exactly one negative eigenvalue $t_+=-\kappa_+^2$ provided $\frac{a}{\beta}>2$ which is independent of $s$ and such that
\begin{equation*}
\kappa_+=\frac{2}{\beta} -\frac{4}{\beta}\, \e^{-4a/\beta}+
\mathcal{O} (\beta^{-1} \e^{-8a/\beta}) \quad\; \mathit{holds\;as}\quad \beta\to 0\,.
\end{equation*}
\end{lemma}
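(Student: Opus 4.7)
\medskip
\noindent\textbf{Proof proposal.} The strategy is to diagonalise $T_{a,\beta}^{+}(s)$ explicitly. Since the operator acts as $-f''$ on $(-a,a)\setminus\{0\}$ with Dirichlet conditions at $\pm a$, every eigenfunction associated with a negative eigenvalue $-\kappa^2$ must be of the form $f(u) = A\sinh(\kappa(u+a))$ on $(-a,0)$ and $f(u) = B\sinh(\kappa(a-u))$ on $(0,a)$. The two matching conditions at $u=0$ then provide a $2\times 2$ linear system in $A,B$ whose compatibility gives the eigenvalue condition.

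The crucial observation is that the first matching condition $f'(0_-) = f'(0_+)$ reads $A\kappa\cosh(\kappa a) = -B\kappa\cosh(\kappa a)$, forcing $A = -B$: continuity of the derivative combined with the symmetric Dirichlet endpoints automatically renders the eigenfunction antisymmetric about $u = 0$. Consequently $f(0_+) + f(0_-) = (A+B)\sinh(\kappa a) = 0$, so the $\gamma(s)$-dependent term in the second matching condition vanishes identically. This is precisely why the eigenvalue is independent of $s$. Substituting $A = -B$ into the remaining matching condition yields the transcendental equation
\begin{equation*}
\kappa\,(1+e^{-2\kappa a}) = \frac{2}{\beta}\,(1-e^{-2\kappa a}),
\end{equation*}
equivalently $\tanh(\kappa a) = \beta\kappa/2$.

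Existence and uniqueness of a positive root then follow from a concavity argument: the map $\kappa \mapsto (2/\beta)\tanh(\kappa a)$ is strictly concave on $[0,\infty)$, vanishes at the origin with slope $2a/\beta$, and is bounded above by $2/\beta$, whereas the identity function has slope $1$ and diverges. The hypothesis $a/\beta > 2$ comfortably secures $2a/\beta > 1$, so the two graphs cross at exactly one point $\kappa_+ > 0$, giving the unique negative eigenvalue $-\kappa_+^2$ of $T_{a,\beta}^{+}(s)$.

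For the asymptotics the plan is to iterate the fixed-point equation. To leading order $\kappa_+ \approx 2/\beta$, so $e^{-2\kappa_+ a}$ is of order $e^{-4a/\beta}$ and exponentially small. Rearranging the equation as $\kappa_+ = 2/\beta - (4/\beta)\,e^{-2\kappa_+ a}/(1 + e^{-2\kappa_+ a})$ and expanding the geometric series yields $\kappa_+ = 2/\beta - (4/\beta)\,e^{-2\kappa_+ a} + O(\beta^{-1}e^{-4\kappa_+ a})$; substituting the leading-order value of $\kappa_+$ into the exponent and controlling the polynomial-in-$(a/\beta)$ prefactor generated by $e^{-2(\kappa_+ - 2/\beta)a}$ against the next exponential smallness $e^{-4a/\beta}$ produces the claimed expansion $\kappa_+ = 2/\beta - (4/\beta)e^{-4a/\beta} + O(\beta^{-1}e^{-8a/\beta})$. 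Conceptually the only real step is recognising that derivative continuity forces antisymmetry and thereby eliminates the $\gamma(s)$-dependence; once this is in hand the problem is a scalar one-parameter calculation, and the only mild technical care is the error bookkeeping at the last iteration.
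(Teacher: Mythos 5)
Your proposal is correct and follows essentially the same route as the paper: the Dirichlet conditions at $\pm a$ together with continuity of the derivative force the eigenfunction to be odd, which annihilates the $\gamma(s)$-dependent term and reduces the problem to the transcendental equation $\kappa=\frac{2}{\beta}\tanh(\kappa a)$, after which a monotonicity/concavity argument gives uniqueness and an expansion of the exponentially small terms gives the asymptotics. The only differences are cosmetic (concavity of $\tanh$ versus monotonicity of $\kappa\mapsto\frac{2}{\kappa}\tanh(\kappa a)$, and a fixed-point iteration versus a direct Taylor expansion of $\tanh$), and your error bookkeeping is at least as careful as the paper's.
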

\begin{proof}
An eigenfunction corresponding to the eigenvalue $-\kappa^2$ and obeying the conditions $f(\pm a)=0$ and $f'(0_-)=f'(0_+)$ is, up to a multiplicative constant, equal to $\sinh(\kappa(x\mp a))$ for $\pm x\in(0,a)$. The function is odd, hence $f(0_-)=-f(0_+)$ and the $s$-dependent term does not influence the eigenvalue; the spectral condition is easily seen to be
\begin{equation}
\label{spec} \kappa= \frac{2}{\beta}\tanh(\kappa a)\,.
\end{equation}
We are interested in the asymptotic behaviour of the solution as $\beta\to 0_+$. Let us rewrite the condition as $\beta= \frac{2}{\kappa}\tanh(\kappa a)$; since the right-hand side is monotonous as a function of $\kappa$ it is clear that there is at most one eigenvalue and that this happens if $\beta<2a$. Furthermore, the right-hand side is less that $\frac{2}{\kappa}$ which means that $\kappa<2\beta^{-1}$ and the inequality turns to equality as $\beta\to 0$ and $\kappa\to\infty$. Next we employ Taylor expansion
\begin{equation*}
\frac{2}{\beta}\tanh(\kappa a) = \frac{2}{\beta}\,
\Big( 1 - 2\e^{-2\kappa a} + 2\e^{-4\kappa a} + \mathcal{O}(\e^{-6\kappa a}) \Big)\,,
\end{equation*}
and since $\kappa\to \frac{2}{\beta}$ as $\beta\to 0$, relation (\ref{spec}) yields the sought result.
\end{proof}

Next we estimate in a similar fashion the operator with Neumann boundary condition which we need to get a lower bound. To this aim we employ the quadratic form $q_{a,\beta}^{-}$ defined as
\begin{eqnarray*}
\lefteqn{q_{a,\beta}^{-}[f]=\|\pd_u f\|^2+(1+a\gamma_+)^{-2} \|\pd_s f\|^2 +(f,V^{(-)}f)} \\ &&
-\beta^{-1}\int_{0}^{l}|f(s,0_+)-f(s,0_-)|^2 \,\D s - \frac12 \int_{0}^{l}\gamma(s) \big(|f(s,0_+)|^2-|f(s,0_-)|^2\big) \,\D s \\ && -\gamma_+\int_{0}^{l}|f(s,a)|^2\,\D s-\gamma_+\int_{0}^{l}|f(s,-a)|^2\,\D s
\end{eqnarray*}
where $V^{(-)}=-\frac{a\gamma_+''}{2(1-a\gamma_+)^3} -\frac{5(a\gamma'_+)^2}{4(1-a\gamma_+)^4} -\frac{\gamma^2}{4(1-a\gamma_+)^2}$. As in the previous case, the operator associated with the quadratic form can be written as $Q_{a,\beta}^{-}=U_{a}^-\otimes I+\int_{[0,L)}^{\oplus} T_{a,\beta}^{-}(s)\,\D s$, where the operator $T_{a,\beta}^{-}(s)$ referring to the transverse variable acts for any $s\in [0,L)$ as $T_{a,\beta}^{-}(s)f=-f''$ with the domain
\begin{eqnarray} \label{ndomain}
\hspace{-2em} \lefteqn{\mathcal{D}(T_{a,\beta}^{-}(s))=
\big\{f\in H^2((-a,a)\setminus\{0\})\mid\: \mp\gamma_+f(\pm a)=f'(\pm a)\,,} \\ &&
f'(0_-)=f'(0_+)= -\beta^{-1}(f(0_+)-f(0_-))+\frac12 \gamma(s)(f(0_+)+f(0_-)) \big\} \nonumber
\end{eqnarray}
We are going to estimate the spectrum of $T_{a,\beta}^{-}(s)$ and to check its independence of $s$.
\begin{lemma} \label{trans-}
The operator $T_{a,\beta}^{-}(s)$ has exactly one negative eigenvalue $t_-=-\kappa_-^2$ as long as $\frac{2}{\beta}>\gamma_+$; it is independent of $s$ and for $\beta\to 0$ we have
\begin{equation*}
\kappa_-= \frac{2}{\beta} +\frac{4}{\beta}\, \frac{2-\beta\gamma_+}{2+\beta\gamma_+}\:\e^{-4a/\beta}
+\mathcal{O}\left( -\frac{4}{\beta}\, \left(\frac{2-\beta\gamma_+}{2+\beta\gamma_+}\right)^2\, \e^{-8a/\beta}\right)\,.
\end{equation*}
\end{lemma}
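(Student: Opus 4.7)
The plan is to mirror the argument for Lemma \ref{trans+}: find the eigenfunction explicitly as a piecewise combination of hyperbolic functions, reduce the eigenvalue equation to an implicit scalar relation in $\kappa$, and then extract the desired asymptotics in the limit $\beta\to 0_+$. Looking for a negative eigenvalue $-\kappa^2$ with $\kappa>0$, I would solve $-f''=-\kappa^2 f$ on each of $(-a,0)$ and $(0,a)$ and use the Robin-type conditions $\mp\gamma_+f(\pm a)=f'(\pm a)$ to fix the solution up to a single multiplicative constant on each side, obtaining
\begin{equation*}
f(x)=C_\pm\bigl[\gamma_+\sinh(\kappa(a\mp x))+\kappa\cosh(\kappa(a\mp x))\bigr]\quad\text{for}\quad\pm x\in(0,a).
\end{equation*}

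The first matching relation, $f'(0_-)=f'(0_+)$, forces $C_-=-C_+$, so every bound-state eigenfunction is automatically odd at the origin, $f(0_-)=-f(0_+)$. This is the crucial point because it kills the $s$-dependent term $\tfrac12\gamma(s)(f(0_+)+f(0_-))$ in the $\delta'$ jump condition. The remaining matching relation then collapses to $f'(0_+)=-2\beta^{-1}f(0_+)$, which together with the expressions for $f(0_+)$ and $f'(0_+)$ from above yields the implicit spectral equation
\begin{equation*}
\tanh(\kappa a)=\frac{\kappa(2-\beta\gamma_+)}{\beta\kappa^2-2\gamma_+},
\end{equation*}
manifestly independent of $s$. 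A monotonicity argument on the two sides analogous to that in Lemma \ref{trans+}, combined with the assumption $2/\beta>\gamma_+$, gives existence and uniqueness of a root $\kappa_-$ in the asymptotic regime $\kappa\sim 2/\beta$.

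For the quantitative expansion I would write $\kappa_-=\tfrac{2}{\beta}(1+\eta)$ and expand both sides about $\kappa=2/\beta$, where they happen to coincide at the value $1$. Differentiating the right-hand side there gives $\mathrm{RHS}=1-\tfrac{2+\beta\gamma_+}{2-\beta\gamma_+}\,\eta+O(\eta^2)$, whereas $\tanh(\kappa_- a)=1-2\mathrm e^{-4a/\beta}+O(\mathrm e^{-8a/\beta})$ since the $\eta$-correction to $\tanh$ is exponentially negligible. Equating and solving for $\eta$ produces $\eta=2\,\tfrac{2-\beta\gamma_+}{2+\beta\gamma_+}\,\mathrm e^{-4a/\beta}+O(\mathrm e^{-8a/\beta})$, which translates directly into the stated expansion of $\kappa_-$; the remainder quoted in the lemma comes from carrying this expansion one step further.

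The main obstacle I anticipate is conceptual rather than computational. In Lemma \ref{trans+} the Dirichlet boundary conditions are symmetric and a parity splitting is immediate, but here the symmetric $\tfrac12\gamma(s)$-term in the $\delta'$ condition obstructs a clean even/odd decomposition of the domain. The oddness of bound states must therefore be \emph{derived} from matching the derivatives at $0$ after the Robin-determined solutions on the two half-intervals have been written down, not assumed. Once this step is in place, everything else reduces to a routine asymptotic analysis of a transcendental equation of the same flavour as in Lemma \ref{trans+}.
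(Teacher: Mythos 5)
Your proposal is correct and follows essentially the same route as the paper: solve explicitly on each half-interval, use the Robin conditions at $\pm a$ together with continuity of $f'$ at $0$ to deduce that the bound state is odd (so the $s$-dependent term drops out), and expand the resulting transcendental equation about $\kappa=2/\beta$. Your spectral condition $\tanh(\kappa a)=\kappa(2-\beta\gamma_+)/(\beta\kappa^2-2\gamma_+)$ is algebraically identical to the paper's $\kappa=\frac{2}{\beta}\,\frac{1+Z\mathrm{e}^{-2\kappa a}}{1-Z\mathrm{e}^{-2\kappa a}}$ with $Z=\frac{\kappa-\gamma_+}{\kappa+\gamma_+}$, and your expansion reproduces the stated asymptotics.
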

\begin{proof}
The function satisfying $f''(x)=\kappa^2f(x)$ for $x\ne 0$ together with the boundary conditions $\mp\gamma_+f(\pm a)=f'(\pm a)$, which has its derivative continuous at $x=0$, is of the form
\begin{equation*}
f(x) = \left\{ \begin{array}{llc} A\,\e^{\kappa x}+B\,\e^{-\kappa x} &\quad \mathrm{if}\;& x\in(-a,0) \\[.3em]
C\,\e^{\kappa x}+D\,\e^{-\kappa x} &\quad \mathrm{if}\;& x\in(0,a) \end{array} \right.
\end{equation*}
The constant $A$ is arbitrary, while for the others the requirements imply $B= AZ\,\e^{-2\kappa a}$ with $Z:= \frac{\kappa-\gamma_+}{\kappa+\gamma_+}$ and $D=-A,\: C=-B$. The remaining property from (\ref{ndomain}) leads to
\begin{equation*}
\kappa(A-B)=\frac{1}{\beta}\left(A+B-C-D\right)+\frac12\gamma(s)\left(A+B+C+D\right)\,,
\end{equation*}
and since the last term vanishes we can rewrite the spectral condition as
\begin{equation*}
\kappa =\frac{2}{\beta}\frac{1+Z\,\e^{-2\kappa a}}{1-Z\,\e^{-2\kappa a}}\,.
\end{equation*}
As before we are interested in the regime $\beta\to 0_+$. Note that as long as $Z>0$ we have $\kappa>2\beta^{-1}$, hence $\kappa$ is large and $\xi=Z\,\e^{-2\kappa a}$ is small and the expansion
\begin{equation*}
\kappa=\frac{2}{\beta}\frac{1+\xi}{1-\xi}=\frac{2}{\beta}(1+\xi)(1+\xi+\xi^2+\mathcal{O}(\xi^3))
\end{equation*}
yields the stated behaviour of $\kappa$ as $\beta\to 0_+$. The assumption $Z>0$ is satisfied for $2\beta^{-1}>\gamma_+$, and the uniqueness of the eigenvalue is a consequence of the above spectral condition and the monotonicity of the function $\kappa \mapsto \frac{1}{\kappa}\,\frac{1+Z\,\e^{-2\kappa a}}{1-Z\,\e^{-2\kappa a}}$.
\end{proof}

Next we estimate the eigenvalues of the operators $U_a^+$ and $U_a^-$ referring to the longitudinal variable $s$ in a way similar to \cite{EY}.
\begin{lemma} \label{l: longest}
There is a positive $C$ independent of $a$ and $j$ such that
\begin{equation*}
|\mu_j^\pm(a)-\mu_j|\leq Caj^2
\end{equation*}
holds for $j\in\mathbb N$ and $0<a<\frac{1}{2\gamma_+}$, where $\mu_j^\pm(a)$ are the eigenvalues of $U_a^\pm$, respectively, with the multiplicity taken into account.
\end{lemma}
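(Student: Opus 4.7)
The plan is to realize $U_a^\pm$ as Schr\"odinger operators on $L^2(0,L)$ with periodic boundary conditions, compare their quadratic forms with the form $s[\cdot]$ of $S$, and then invoke the minimax principle. Reading $U_a^\pm$ off the direct-integral decomposition of $Q_{a,\beta}^\pm$ gives
\[
U_a^\pm=-(1\mp a\gamma_+)^{-2}\frac{\D^2}{\D s^2}+V^{(\pm)}(s),
\]
where only the longitudinal kinetic term and the $s$-dependent geometric potential $V^{(\pm)}$ appear, the $u$-dependent contributions having been absorbed into the transverse operators $T_{a,\beta}^\pm(s)$. Since $\Gamma\in C^4$, the functions $\gamma,\gamma',\gamma''$ are uniformly bounded, so a Taylor expansion in $a$ yields $V^{(\pm)}(s)=-\frac{1}{4}\gamma(s)^2+\OO(a)$ and $(1\mp a\gamma_+)^{-2}=1+\OO(a)$, uniformly in $s$ throughout the admissible range $0<a<1/(2\gamma_+)$. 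Consequently
\[
\bigl|\,u_a^\pm[f]-s[f]\,\bigr|\le C\,a\bigl(\|f'\|^2+\|f\|^2\bigr)\qquad\forall f\in H^1_{\mathrm{per}}(0,L),
\]
with $C$ depending only on $\|\gamma\|_{C^2}$ and $L$.

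Next I would invoke the Weyl-type bound $\mu_j\le C_1 j^2$, which follows by applying the minimax principle with the first $j$ Fourier modes as a test space and noting that $S$ differs from the free periodic Laplacian only by a bounded potential. On the span $F_j=\mathrm{span}\{\phi_1,\dots,\phi_j\}$ of the first $j$ eigenfunctions of $S$, orthonormality gives in turn
\[
\|f'\|^2\le s[f]+\frac{1}{4}\|\gamma\|_\infty^2\|f\|^2\le(\mu_j+C_2)\|f\|^2\le C_3\, j^2\|f\|^2.
\]
Combining this with the form comparison yields $u_a^\pm[f]\le(\mu_j+Caj^2)\|f\|^2$ on $F_j$, so the minimax principle delivers $\mu_j^\pm(a)\le\mu_j+Caj^2$.

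For the matching lower estimate I would swap the roles of $S$ and $U_a^\pm$. The upper bound just derived gives $\mu_j^\pm(a)\le C_4 j^2$, and the coercivity $u_a^\pm[f]\ge\frac{1}{2}\|f'\|^2-C_5\|f\|^2$, valid once $a<1/(2\gamma_+)$ because then $(1\mp a\gamma_+)^{-2}\ge 1/2$ and $V^{(\pm)}$ is uniformly bounded below, produces the analogous bound $\|f'\|^2\le C_6 j^2\|f\|^2$ on the span of the first $j$ eigenfunctions of $U_a^\pm$. A symmetric application of the minimax principle then yields $\mu_j\le\mu_j^\pm(a)+Caj^2$, which completes the two-sided estimate.

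The main obstacle is securing that the constant $C$ is genuinely independent of both $a$ and $j$. This rests on two points: the $C^4$ regularity of $\Gamma$, which bounds $V^{(\pm)}+\frac{1}{4}\gamma^2$ pointwise by $\OO(a)$ with an $s$-independent constant, and the restriction $a<1/(2\gamma_+)$, which keeps the leading kinetic coefficient $(1\mp a\gamma_+)^{-2}$ bounded away from both zero and infinity, so that the coercivity constant in the lower form bound on $u_a^\pm$ does not degenerate as $a$ varies in the admissible range.
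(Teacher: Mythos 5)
Your argument is correct, and it reaches the two-sided bound by a route that differs in one essential mechanism from the paper's. The paper does not compare $U_a^\pm$ with $S$ directly at the form level; instead it compares $U_a^\pm$ with the exactly rescaled operator $(1\mp a\gamma_+)^{-2}S$, whose eigenvalues are $(1\mp a\gamma_+)^{-2}\mu_j$ by inspection. The difference $U_a^\pm-(1\mp a\gamma_+)^{-2}S$ is then a \emph{bounded multiplication operator} of norm $\OO(a)$, so the min-max principle gives $|\mu_j^\pm(a)-(1\mp a\gamma_+)^{-2}\mu_j|\le c_0a$ with no $j$-dependence at all; the factor $j^2$ enters only at the last step through $|(1\mp a\gamma_+)^{-2}-1|\,|\mu_j|\le c_1a|\mu_j|$ combined with the Weyl bound $|\mu_j|\le C_1j^2$ (which the paper obtains, as you do, by comparing $S$ with the free periodic Laplacian). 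Your version instead treats the kinetic coefficient perturbatively inside the quadratic form, which forces you to control $\|f'\|^2$ on the relevant $j$-dimensional trial spaces; this is where you need the auxiliary estimate $\|f'\|^2\le C_3j^2\|f\|^2$ on the span of the first $j$ eigenfunctions (of $S$ for the upper bound, of $U_a^\pm$ for the lower one, the latter requiring the coercivity step). That estimate is correct and your derivation of it is sound, so the proof goes through; it is simply a slightly heavier piece of machinery than the paper's exact-rescaling trick, which sidesteps any discussion of the Dirichlet form on spectral subspaces. Two trivial points: for $U_a^-$ the kinetic coefficient $(1+a\gamma_+)^{-2}$ is bounded below by $4/9$ rather than $1/2$ on the range $a<1/(2\gamma_+)$ (immaterial, since any uniform positive lower bound suffices), and in the upper-bound step one should note $\mu_j+C_2\le (C_1+C_2)j^2$ uses $j\ge1$, which is fine.
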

\begin{proof}
We employ the operator $S_0=-\partial^2_s$ with the periodic boundary conditions, i.e. the domain $\mathcal{D}(S_0)=\{f\in L^2((0,L))\mid\, f(0)=f(L),\, f'(0)=f'(L)\}$; its eigenvalues, counting multiplicity, are $4\left[\frac{j}{2}\right]^2 \frac{\pi^2}{L^2},\: j=1,2,\dots\,,$ where $[\cdot]$ denotes as usual the entire part. Its difference from our comparison operator (\ref{comparison}) on $L^2(0,L)$ is easily estimated,
\begin{equation*}
\|S-S_0\|\leq\frac{1}{4}\gamma_+^2\,,
\end{equation*}
and consequently, by min-max principle we have
\begin{equation}
\label{p1}
\left|\mu_j-4\left[\frac{j}{2}\right]^2\frac{\pi^2}{l^2}\right|\leq\frac{1}{4}\gamma_+^2
\end{equation}
for $j\in\mathbb{N}$. Next we can use another simple estimate,
\begin{equation*}
U_a^+-\frac{1}{(1-a\gamma_+)^2}S=\frac{a\gamma''_+}{2(1-a\gamma_+)^3}
-\frac{\gamma^2}{4(1+a\gamma_+)^2}+\frac{\gamma^2}{4(1-a\gamma_+)^2}\,,
\end{equation*}
and since the last two terms equal $a\gamma_+ \gamma^2 (1-a^2\gamma_+^2)^{-2}$, we infer that
\begin{equation}
\label{p2} \left|\mu_j^+-\frac{\mu_j}{(1-a\gamma_+)^2}\right|\leq c_0a
\end{equation}
holds for some $c_0>0$ and any $j\in\mathbb{N}$. Combining now (\ref{p1}) and (\ref{p2}) we get
\begin{eqnarray*}
\lefteqn{|\mu_j^+-\mu_j| \leq\left|\mu_j^+-\frac{\mu_j}{(1-a\gamma_+)^2}\right|
+|\mu_j|\cdot\left|\frac{1-(1-a\gamma_+)^2}{(1-a\gamma_+)^2}\right|} \\[.3em] && \quad \leq
c_0 a +c_1 a|\mu_j|\leq C a j^2
\end{eqnarray*}
with suitable constants. The second inequality is checked in a similar way: we use
\begin{equation*}
\hspace{-5em} U_a^--\frac{1}{(1+a\gamma_+)^2}S =-\frac{a\gamma''_+}{2(1-a\gamma_+)^3}
-\frac{5a^2(\gamma'_+)^2}{4(1-a\gamma_+)^4}
-\frac{a\gamma_+}{(1-a\gamma_+)^2(1+a\gamma_+)^2}\gamma^2
\end{equation*}
which implies
\begin{equation*}
\left|U_a^--\frac{1}{(1+a\gamma_+)^2}S\right| \leq c_0 a+c_1 a^2\leq c_2 a\,,
\end{equation*}
where in the second inequality we employed the fact that $a$ is bounded. With help of min-max principle we then get
\begin{equation*}
\left|\mu_j^--\frac{\mu_j}{(1+a\gamma_+)^2}\right| \leq c_2a
\end{equation*}
hence finally we arrive at the inequality
\begin{equation*}
|\mu_j^--\mu_j| \leq
c_2a +|\mu_j|\,\left|\frac{1-(1+a\gamma_+)^2}{(1+a\gamma_+)^2}\right|
\leq c_2a +c_3 a |\mu_j|\leq C a j^2
\end{equation*}
valid for a suitable $C$ which completes the proof.
\end{proof}

Now we are ready to prove our first main result:

\medskip
\noindent We define $a(\beta)=-\frac{3}{4}\beta\ln\beta$ and denote the eigenvalues of the operators $T_{a(\beta),\beta}^{\pm}$ as $t_{\pm,\beta}^j$, respectively, their multiplicities being taken into account. From Lemmata~\ref{trans+} and \ref{trans-} we know that $t_{\pm,\beta}^1=t_\pm$ for $\beta$ small enough, while $t_{\pm,\beta}^j\geq 0$ holds for $j>1$. Collecting the estimates worked out above we have
\begin{eqnarray}
\lefteqn{Q_{a(\beta),\beta}^{-} = U_{a(\beta)}^-\otimes I+\int_{(0,L)}^\oplus
T_{a(\beta),\beta}^{-}(s)\,D s \leq H_N(\beta) \leq H_\beta} \nonumber \\[.3em] &&
\qquad \leq U_{a(\beta)}^+\otimes I+\int_{(0,L)}^\oplus
T_{a(\beta),\beta}^{+}(s)\,\D s =Q_{a(\beta),\beta}^{+} \label{fullest}
\end{eqnarray}
and the eigenvalues of the operators $Q_{a(\beta),\beta}^{\pm}$ between which we squeeze our singular Schr\"odinger operator $H_\beta$ are naturally $t_{\pm,\beta}^k +\mu^\pm_j(a(\beta))$ with $k,j\in\mathbb{N}$. Those with $k\geq2$ and $j\in\mathbb{N}$ are uniformly bounded from below in view of the inequality
\begin{equation}
\label{l1}t_{\pm,\beta}^k+\mu^\pm_j(a(\beta))\geq\mu^{\pm}_1(a(\beta))=\mu_1+\mathcal{O}(-\beta\ln(\beta))\,,
\end{equation}
hence we can focus on $k=1$ only. For $j\in\mathbb N$ we denote
\begin{equation*}
\omega_{\pm,\beta}^j=t_{\pm,\beta}^1+\mu^\pm_j(a(\beta))
\end{equation*}
With our choice of $a(\beta)$ we have $\mathrm{e}^{-4\kappa a}= \beta^3$ so from the above lemmata we get $\kappa_\pm = \frac{2}{\beta} + \mathcal{O}(\beta)$ and $\mu^\pm_j(a(\beta))$ differ from $\mu_j$ by $\mathcal{O}(-\beta j^2|\ln\beta|)$; putting these estimates together we can conclude that
\begin{equation}
\label{l2} \omega_{\pm,\beta}^j=-\frac{4}{\beta^2}+\mu_j+\mathcal
O(-\beta\ln\beta)\quad\textrm{as}\;\;\beta\rightarrow0_+
\end{equation}
with the error term in general dependent on $j$. Combining (\ref{l1}) and (\ref{l2}) we can conclude that to any $n\in \mathbb{N}$ there is a $\beta(n)>0$ such that
\begin{equation*}
\omega_{+,\beta}^n\leq 0\,,\quad \omega_{+,\beta}^n<t_{+,\beta}^k+\mu^+_j(a(\beta))
\quad\mathrm{and}\quad
\omega_{-,\beta}^n< t_{-,\beta}^k+\mu^-_j(a(\beta))
\end{equation*}
holds for $\beta\leq\beta(n)$, $k\geq2$, and $j\geq1$. Hence $j$-th eigenvalue of $Q_{a(\beta),\beta}^{\pm}$, counting multiplicity, is $\omega_{\pm,\beta}^j$ for all $j\leq n$ and $\beta\leq\beta(n)$. Furthermore, for $\beta\leq\beta(n)$ we denote $\xi_{+}^j(\beta)$ and $\xi_{-}^j(\beta)$ the $j$-th eigenvalue of $H_D(\beta)$ and $H_N(\beta)$, respectively; then from (\ref{fullest}) and the min-max principle we obtain
\begin{equation*}
\omega_{-,\beta}^n\leq\xi_{-}^j(\beta)\qquad\xi_{+}^j(\beta)\leq\omega_{+,\beta}^n
\end{equation*}
for $j=1,2,\dots,n\,$, which in particular implies $\xi_{+}^n(\beta)<0$. Using the min-max principle once again we conclude that $H_\beta$ has at least $n$ eigenvalues in the interval $(-\infty,\xi_{+}^n(\beta))$ and for any $1\leq j\leq n$ we have $\xi_{-}^j(\beta)\leq \lambda_j\leq \xi_{+}^j(\beta)$ which completes the proof.

\section{Proof of Theorem~\ref{thm2}}
\setcounter{equation}{0}

For a self-adjoint operator $A$ with $\inf\sigma_\mathrm{ess}(A)=0$ we put $N^-(A):=\#\{\sigma_d(A)\cap(-\infty,0)\}$. In view of (\ref{fullest}) the eigenvalue number of $H_\beta$ can be estimated as
\begin{equation}
\label{ner} N^-(Q_{a,\beta}^{-})\leq
N^-(H_N\beta)\leq\#\sigma_d(H_\beta)\leq N^-(H_D(\beta))\leq
N^-(Q_{a,\beta}^{+})
\end{equation}
In order to use this estimate we define
\begin{equation*}
K^\pm_\beta=\{j\in\mathbb N|\:\omega_{\pm,\beta}^j<0\}
\end{equation*}
and derive the following asymptotic expansions of these quantities.

\begin{lemma} \label{strongK}
In the strong coupling limit, $\beta\rightarrow0_+$, we have
\begin{equation}
\#K^\pm_\beta=\frac{2L}{\pi\beta} +\mathcal{O}(|\ln\beta|)\,.
\end{equation}
\end{lemma}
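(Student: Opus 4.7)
The plan is to characterize $K^\pm_\beta$ explicitly as a set of indices and then count. Since $t_{\pm,\beta}^1 = -\kappa_\pm^2$, the membership condition $j \in K^\pm_\beta$ is equivalent to $\mu_j^\pm(a(\beta)) < \kappa_\pm^2$. With the choice $a(\beta) = -\tfrac34 \beta \ln\beta$ one has $\mathrm{e}^{-4a(\beta)/\beta} = \beta^3$, so Lemmata~\ref{trans+} and \ref{trans-} yield $\kappa_\pm = 2/\beta + \mathcal{O}(\beta^2)$ and therefore the threshold is $\kappa_\pm^2 = 4/\beta^2 + \mathcal{O}(\beta)$.

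Next, I would expand $\mu_j^\pm(a(\beta))$ in two stages. Lemma~\ref{l: longest} gives $\mu_j^\pm(a(\beta)) = \mu_j + \mathcal{O}(\beta |\ln\beta|\, j^2)$, and the inequality (\ref{p1}) used in its proof yields $\mu_j = 4\pi^2 [j/2]^2 / L^2 + \mathcal{O}(1)$. Substituting, the condition $\omega_{\pm,\beta}^j < 0$ becomes
$$
\frac{4\pi^2}{L^2}\left[\frac{j}{2}\right]^2 \bigl(1 + \mathcal{O}(\beta |\ln\beta|)\bigr) < \frac{4}{\beta^2} + \mathcal{O}(1).
$$
Taking positive square roots produces $[j/2] < L/(\pi\beta) + \mathcal{O}(|\ln\beta|)$, hence $j < 2L/(\pi\beta) + \mathcal{O}(|\ln\beta|)$; counting positive integers satisfying this inequality yields the stated asymptotic expansion. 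The argument is symmetric in the two signs since Lemmata~\ref{trans+}, \ref{trans-}, and \ref{l: longest} provide matching bounds for $+$ and $-$.

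The main obstacle is tracking the error precisely at the critical threshold $j^\ast \sim 2L/(\pi\beta)$. For such indices $j^2 \sim \beta^{-2}$, so the error term $\beta |\ln\beta|\, j^2$ inherited from Lemma~\ref{l: longest} is of order $|\ln\beta|/\beta$, which equals the local spacing $\mu_{j+1}-\mu_j \sim 1/\beta$ multiplied by $|\ln\beta|$. Consequently, up to $\mathcal{O}(|\ln\beta|)$ integer values of $j$ near $j^\ast$ are ambiguous, which is precisely the error claimed; one cannot sharpen the estimate without improving Lemma~\ref{l: longest}. The remaining calculation — namely, checking that a multiplicative perturbation of $[j/2]^2$ by a factor $1 + \mathcal{O}(\beta |\ln\beta|)$ shifts the admissible range of $j$ only by $\mathcal{O}(|\ln\beta|)$ — is routine algebra and presents no difficulty.
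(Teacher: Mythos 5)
Your proposal is correct and follows essentially the same route as the paper: reduce membership in $K^\pm_\beta$ to the inequality $\mu_j^\pm(a(\beta))<\kappa_\pm^2$, feed in Lemmata~\ref{trans+}, \ref{trans-}, \ref{l: longest} and the comparison (\ref{p1}) with $S_0$, and solve the resulting quadratic inequality in $j$ to locate the threshold index at $2L/(\pi\beta)+\mathcal{O}(|\ln\beta|)$. The only difference is presentational — you carry two-sided $\mathcal{O}$-bounds for each sign simultaneously, whereas the paper proves a lower bound for $\#K^+_\beta$ and an upper bound for $\#K^-_\beta$ with explicit auxiliary constants $K,K'$ and then closes the gap via the inclusion $K^+_\beta\subset K^-_\beta$ — and your identification of the $j\sim\beta^{-1}$ indices as the source of the $\mathcal{O}(|\ln\beta|)$ ambiguity is exactly right.
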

\begin{proof}
We choose $K$ such that $\beta^{-1}>K>0$ and $(\beta^{-1}-K)^2<\beta^{-2}-4\beta-16^{-1}\gamma_+^2$. With the preceding proof in mind we can write
\begin{equation*}
K^+_\beta=\{j\in\mathbb N\,|\:t_{+,\beta}^1+\mu^+_j(a(\beta))<0\}\,.
\end{equation*}
Lemma~\ref{trans+} allows us to make the following estimate,
\begin{equation*}
K^+_\beta \supset \left\{j\in\mathbb{N}\,\big|\: \mu_j+Ca(\beta)j^2
<\frac{4}{\beta^2}-\frac{16}{\beta^2}\,\e^{-4a(\beta)/\beta}
=\frac{4}{\beta^2}-16\beta\right\}\,;
\end{equation*}
using further (\ref{p1}) and the indicated choice of $K$ we infer that
\begin{eqnarray*}
K^+_\beta\supset \left\{j\in\mathbb{N}\,\Big|\: 4\left[\frac{j}{2}\right]^2\frac{\pi^2}{L^2}+Ca(\beta)j^2
<\frac{4}{\beta^2}-16\beta-\frac{1}{4}\gamma_+^2 \right\}\\
\quad\;\;\; \supset \left\{j\in\mathbb{N}\, \Big|\: j^2\frac{\pi^2}{L^2}-\frac34 C\,\beta\ln\beta\,j^2 <4\left(\frac{1}{\beta}-K\right)^2 \right\}\\
\quad\;\;\; \supset \left\{j\in\mathbb{N}\,\Big|\: j<2\left(\frac{1}{\beta}-K\right)\left(\frac{\pi^2}{L^2}-\frac34 C \beta \ln\beta\right)^{-1/2} \right\}
\end{eqnarray*}
We employ the Taylor expansion $(M+x)^{-1/2}=M^{-1/2}-\frac12\,xM^{-3/2}+\mathcal{O}(x^2)$; since we are interested in the asymptotics $\beta\rightarrow0_+$, we rewrite the right-hand side of the last inequality as
\begin{equation*}
2\left(\frac{1}{\beta}-K\right)\left(\frac{\pi^2}{L^2}-\frac34 C\,\ln\beta\right)^{-\frac{1}{2}}
\simeq 2\left(\frac{1}{\beta}-K\right) \left[\frac{L}{\pi}+\frac38 C\,\beta\ln\beta \left(\frac{L}{\pi}\right)^3\right]\,,
\end{equation*}
which allows us to infer that
\begin{equation}
\label{k1}\#K^+_\beta\geq \frac{2L}{\pi\beta}+\mathcal{O}(|\ln\beta|)
\end{equation}
holds as $\beta\rightarrow0_+$. In a similar way we estimate
$\#K^-_\beta$. First we choose a number $K'$ satisfying $0<K'<
\big(4\beta+\frac{\gamma_+^2}{16}\big)^{1/2}$ and note that
$\frac{1}{\beta^2} +4\beta+\frac{\gamma_+^2}{16}
<\left(\frac{1}{\beta}+K'\right)^2$. Then we have
\begin{eqnarray*}
K^-_\beta= \left\{j\in\mathbb{N}\,\big|\: t_{-,\beta}^1+\mu^-_j(a(\beta))<0 \right\}\\
\quad\;\; \subset \left\{j\in\mathbb{N}\,\Big|\: \mu_j-Ca(\beta)j^2<\frac{4}{\beta^2}+\frac{16}{\beta^2}\,
\frac{2-\beta\gamma_+}{2+\beta\gamma_+}\,\e^{-4a(\beta)/\beta} \right \}\\
\quad\;\; \subset \left\{j\in\mathbb{N}\,\Big|\: \mu_j+\frac{3}{4}C\beta\ln\beta\,j^2 <\frac{4}{\beta^2}+16\beta\,\frac{2-\beta\gamma_+}{2+\beta\gamma_+} \right\}
\end{eqnarray*}
With help of the fact that $2(j-1)\geq j$ for $j>1$ we further have
\begin{eqnarray*}
\hspace{-1.8em} K^-_\beta\subset \{1\}\cup \left\{j\geq2\,\Big|\: \left(\frac{(j-1)\pi}{L}\right)^2+\frac{3}{4}C\beta\ln\beta\,(j-1)^2<
\frac{4}{\beta^2}+16\beta+\frac{\gamma_+^2}{4} \right\}\\
\subset\{1\}\cup \left\{j\geq2\,\Big|\: (j-1)^2<
\left(\frac{4}{\beta^2}+16\beta+\frac{\gamma_+^2}{4}\right) \left(\left(\frac{\pi}{L}\right)^2+\frac{3}{4}C\beta\ln\beta\right)^{-1}
\right\}\\
\subset\{1\}\cup \left\{j\geq2\,\Big|\: j<1+
2\left(\frac{1}{\beta}+K'\right)\left(\left(\frac{\pi}{L}\right)^2+\frac{3}{4}C\beta\ln\beta
\right)^{-1/2} \right\}
\end{eqnarray*}
Now we can estimate the expression on the right-hand side of the last inequality in the asymptotic regime $\beta\rightarrow0_+$ as
\begin{equation*}
2\left(\frac{1}{\beta}+K'\right)\left(\left(\frac{\pi}{L}\right)^2
+\frac{3}{4}C\beta\ln\beta\right)^{-\frac{1}{2}}\simeq\frac{2L}{\pi\beta}+\mathcal{O}(|\ln\beta|)\,.
\end{equation*}
In combination with the above inclusions this leads to
\begin{equation}
\label{k2} \#K^-_\beta\leq\frac{2l}{\pi\beta}
+\mathcal{O}(|\ln\beta|)
\end{equation}
as $\beta\rightarrow0_+$. Finally, we know that $t_{+,\beta}^1<t_{-,\beta}^1$ which implies $K^+_\beta\subset K^-_\beta$, and this together with (\ref{k1}) and (\ref{k2}) concludes the proof.
\end{proof}

We also need to estimate the second eigenvalue of the operators $T_{a(\beta),\beta}^{-}(s)$.

\begin{lemma}
$T_{a,\beta}^{-}(s)$ with a fixed $s\in[0,L)$ has no eigenvalues in $\Big[0,\min\Big\{\frac{\gamma_+}{2a}, \big(\frac{\pi}{4a}\big)^2\Big\}\Big)$ provided $0<\beta<2a$.
\end{lemma}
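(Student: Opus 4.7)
The plan is to solve the eigenvalue problem explicitly. Suppose $\lambda=k^2\ge 0$ is an eigenvalue of $T_{a,\beta}^{-}(s)$ with eigenfunction $f$, written in the ansatz $f(x)=A\cos(kx)+B\sin(kx)$ on $(-a,0)$ and $f(x)=C\cos(kx)+D\sin(kx)$ on $(0,a)$. Setting $P(k):=k\sin(ka)-\gamma_+\cos(ka)$ and $Q(k):=k\cos(ka)+\gamma_+\sin(ka)$, the Robin conditions $f'(\pm a)=\mp\gamma_+ f(\pm a)$ translate into $AP=-BQ$ and $CP=DQ$, the continuity $f'(0_+)=f'(0_-)$ gives $B=D$, and for $ka\in[0,\pi/2)$ one has $Q>0$. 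Eliminating $B$ and $D$ then forces $P(A+C)=0$, which splits the analysis into two branches.

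In the branch $P(k)\ne 0$, $A=-C$, so that $f(0_+)+f(0_-)$ vanishes and the interface jump condition collapses to
$$
\sin(ka)(k^2\beta+2\gamma_+)+k\cos(ka)(2-\beta\gamma_+)=0.
$$
The hypothesis $\beta<2a$ together with the geometric admissibility $a\gamma_+<1$ of the tubular strip (and indeed $a\gamma_+<1/2$ as used in Lemma~\ref{l: longest}) gives $2-\beta\gamma_+>0$, so both summands are non-negative on $ka\in[0,\pi/2]$ and the second is strictly positive unless $k=0$; this branch thus admits no eigenvalue with $ka\le\pi/2$, in particular none in $[0,(\pi/(4a))^2)$. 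In the branch $P(k)=0$, i.e.\ $k\tan(ka)=\gamma_+$, one verifies that a genuine eigenfunction exists by solving the remaining interface relation for $A,C$ (using $2/\beta>\gamma_+$, which holds since $\beta\gamma_+<2a\gamma_+<2$). If $ka>\pi/4$ then $\lambda>(\pi/(4a))^2$ immediately; if $ka\le\pi/4$, the convexity of $\tan$ on $[0,\pi/2)$ yields the chord bound $\tan(x)\le\frac{4}{\pi}x$ on $[0,\pi/4]$, which combined with $k\tan(ka)=\gamma_+$ gives $\gamma_+\le\frac{4}{\pi}k^2a$, i.e.\ $\lambda\ge\frac{\pi\gamma_+}{4a}>\frac{\gamma_+}{2a}$. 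A direct inspection of linear functions, using $\beta<2a$, rules out $\lambda=0$.

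The main obstacle is the careful bookkeeping of the two branches and, within the $P=0$ branch, identifying the correct trigonometric inequality: the familiar bound $\tan(x)\ge x$ only yields the upper estimate $\lambda\le\gamma_+/a$ on the $P=0$ eigenvalue, which is the wrong direction, whereas the chord bound $\tan(x)\le\frac{4}{\pi}x$ valid on $[0,\pi/4]$ is precisely what produces the threshold $\gamma_+/(2a)$ stated in the lemma. Everything else reduces to elementary algebra and trigonometry.
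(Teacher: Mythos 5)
Your proof is correct, and it rests on the same basic ingredients as the paper's --- explicit matching of trigonometric solutions across $x=0$ and elementary linear bounds on $\tan$ --- but the case analysis is organized differently, and in one respect more carefully. The paper derives only the single spectral condition $\frac12\beta=\frac{1}{k}\,\frac{\gamma_+\tan ka+k}{\gamma_+-k\tan ka}$, i.e.\ your $P\ne0$ (odd-type) branch, bounds its right-hand side from below by $\frac{1+\gamma_+a}{\gamma_+-k^2a}$ using $\tan x\ge x$, and obtains a contradiction with $\beta<2a$; the positivity $\gamma_+-k\tan ka>0$ needed for that estimate is then guaranteed on $\big[0,\min\{\gamma_+/(2a),(\pi/(4a))^2\}\big)$ by the same kind of linear upper bound on $\tan$ over $[0,\pi/4]$ that you invoke. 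The even-type branch $P=0$, i.e.\ $k\tan ka=\gamma_+$ --- which does yield genuine eigenvalues of $T^-_{a,\beta}(s)$, namely cosine modes with $f'(0_\pm)=0$ whose amplitudes on the two sides are fixed by the interface relation --- is never mentioned in the paper's proof; it is excluded only implicitly, because $\gamma_+-k\tan ka>0$ forces $P\ne 0$ in the range of $k$ under consideration. You make this branch explicit and show directly that its lowest eigenvalue exceeds the threshold (this is where the constant $\gamma_+/(2a)$ actually originates in your argument), while disposing of the odd branch by the simpler observation that both terms of $\sin(ka)(k^2\beta+2\gamma_+)+k\cos(ka)(2-\beta\gamma_+)$ are nonnegative and not simultaneously zero for $0<ka<\pi/2$, which in fact excludes odd-type eigenvalues all the way up to $(\pi/(2a))^2$. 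The two proofs are equivalent in substance; yours is more transparent about where the two constituents of the minimum come from and closes the expository gap concerning the $P=0$ modes.
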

\begin{proof}
Let us check first that zero is not an eigenvalue. The corresponding eigenfunction should have to be linear and the conditions $\mp\gamma_+f(\pm a)=f'(\pm a)$ and $f'(0_-)=f'(0_+)$ would require $f(x)=\pm A(\mp\gamma_+x+1+\gamma_+a)$ for $\pm x\in(0,a)$, and as in Lemma~\ref{trans-} the spectral condition would read $-\gamma_+=\frac{2}{\beta}(1+\gamma_+a)$ which cannot be true because the right-hand side is positive. Furthermore, the spectral condition for an eigenvalue $k^2>0$ is found again as in Lemma~\ref{trans-}; after s simple calculation we find that it reads
\begin{equation*}
\frac12 \beta =\frac{1}{k}\,\frac{\gamma_+\tan ka +k}{\gamma_+ -k\tan ka}
\end{equation*}
The right-hand side can be estimated by $\frac{1+\gamma_+a}{\gamma_+-k^2a}$ provided that $ka<\frac{\pi}{2}$ and
at the same time $\gamma_+-k\tan ka >0$; finding the value for which this expression equals $\frac12\beta$ we would obviously get a lower bound to $k$. Rewriting the condition as
\begin{equation*}
-ka^2=(1+\gamma_+a)\frac{2}{\beta}-\gamma_+
\end{equation*}
we see that the left-hand side is negative while the right-hand side is positive under our assumption, hence one has to ask about the restriction coming from the condition $\gamma_+-k\tan ka >0$. In particular, for $ka< \frac14\pi$ this is true provided $\gamma_+-2k^2a>0$, which means that the spectral problem has no solution is $k^2$ is smaller either than $\frac{\gamma_+}{2a}$ or $\left(\frac{\pi}{4a}\right)^2$ which concludes the argument.
\end{proof}

Now we are ready to prove our second main result:

\medskip

\noindent We begin by showing that the relation
\begin{equation}
\label{rov} N^-(Q_{a(\beta),\beta}^{-})=\#K_\beta^-
\end{equation}
holds for any sufficiently small $\beta>0$. We know that all the eigenvalues of $Q_{a(\beta),\beta}^{-}$ can be written as $\{t_{-,\beta}^j +\mu^-_k(a(\beta))\}_{j,k\in\mathbb N}$ with the multiplicity taken into account. From the previous lemma we have $t_{-,\beta}^2>\textrm{min}\left\{\frac{\gamma_+}{2a},\left(\frac{\pi}{4a}\right)^2\right\}$ which together with $|\mu_j^-(a)-\mu_j|\leq Caj^2$ implies existence of a $\beta_0$ such that
\begin{equation*}
t_{-,\beta}^k+\mu^-_j(a(\beta))>0
\end{equation*}
holds for $j>1$, $\,k\geq1$, and $\beta\in(0,\beta_0)$. This implies
\begin{eqnarray*}
\lefteqn{N^-(Q_{a(\beta),\beta}^{-})=\#\{(k,j)\in\mathbb{N}^2\,|\:
t_{-,\beta}^k+\mu^-_j(a(\beta))<0\}} \\[.3em]
&& \quad = \#\{j\in\mathbb{N}\,|\:t_{-,\beta}^1+\mu^-_j(a(\beta))<0\} =: K_\beta^-\,,
\end{eqnarray*}
i.e. the relation (\ref{rov}); combining it with (\ref{ner}) we obtain
\begin{equation*}
\#K_\beta^+\leq\#\sigma_d(H_\beta)\leq
N^-(Q_{a,\beta}^{-})=\#K_\beta^-
\end{equation*}
which by virtue of Lemma~\ref{strongK} concludes the proof.

\section{Concluding remarks}

We have seen that, despite very different eigenfunctions, the $\delta'$ `leaky loops' behave in the strong-coupling regime similarly to their $\delta$ counterparts: the number of negative eigenvalues is given in the leading order by a Weyl-type term, and the eigenvalues themselves are after a natural renormalization determined by the one-dimensional Schr\"odinger operator with the known curvature-induced potential.

The question is whether and how the current results can be extended. The bracketing technique we used would work for infinite smooth curves $\Gamma$ without ends provided we impose suitable regularity assumptions. If, on the other hand, the curve is finite or semi-infinite the situation becomes more complicated because one has to impose appropriate boundary conditions at the endpoints of the interval on which the comparison operator (\ref{comparison}) is defined. One can modify the present argument to get an estimate on the number of eigenvalues because there those boundary conditions play no role, the counting functions in the Dirichlet and Neumann case differing by an $\OO(1)$ term. For an eigenvalue position estimate, on the other hand, this is not sufficient and one conjectures that the \emph{Dirichlet} comparison operator has to be used. For a two-dimensional open arc $\Gamma$ supporting a $\delta$ interaction this conjecture has been proved recently \cite{EP}; the argument is more complicated because one cannot use operators with separating variables. We believe that the same method could work in the $\delta'$ case too, however, the question is not simple and we postpone discussing it to another paper.

On the other hand, finding the asymptotics in the case when $\Gamma$ is not smooth, or even has branching points, represents a much harder problem and the answer is not known even in the $\delta$ case, although some inspiration can be found in squeezing limits of Dirichlet tubes --- see, e.g., \cite{CE}.

\section*{Acknowledgments}
The research was supported by the Czech Science Foundation  within
the project P203/11/0701 and by by Grant Agency of the Czech
Technical University in Prague, grant No. SGS13/217/OHK4/3T/14.

\section*{References}
\bibliographystyle{alpha}

\end{document}